\newcommand{\half}{\mbox{$\textstyle \frac{1}{2}$}}
\newcommand{\av}[1]{\langle #1 \rangle}
\newcommand{\bs}[1]{\boldsymbol{#1}}
\newcommand{\be}{\begin{equation}}
\newcommand{\ee}{\end{equation}}
\newcommand{\bea}{\begin{eqnarray}}
\newcommand{\eea}{\end{eqnarray}}
\begin{document}
\title{Limited measurement dependence in multiple runs of a Bell test}
\author{James E. Pope}
\affiliation{Mathematical Institute, University of Oxford, 24-29 St Giles', OX1 3LB, UK}
\author{Alastair Kay}
\affiliation{Department of Mathematics, Royal Holloway University of London, Egham, Surrey, TW20 0EX, UK}
\affiliation{Keble College, Parks Road, Oxford, OX1 3PG, UK}

\pacs{03.65.Ta, 03.65.Ud}

\begin{abstract}
The assumption of free will -- the ability of an experimentalist to make random choices -- is central to proving the indeterminism of quantum resources, the primary tool in quantum cryptography. Relaxing the assumption in a Bell test allows violation of the usual classical threshold by correlating the random number generators used to select measurements with the devices that perform them. In this paper, we examine not only these correlations, but those across multiple runs of the experiment. This enables an explicit exposition of the optimal cheating strategy and how the correlations manifest themselves within this strategy. Similar to other recent results, we prove that there remain Bell violations for a sufficiently high, yet non-maximal degree of free will which cannot be simulated by a classical attack, regardless of how many runs of the experiment those choices are correlated over.
\end{abstract} \maketitle

\newtheorem{theorem}{Theorem}
\section{I. Introduction}
Bell's theorem \cite{1} provides an experimentally falsifiable prediction for certain correlations if Nature is deterministic. That these inequalities are found to be violated \cite{2,3} …constitutes proof of the incompatibility of classical, deterministic or stochastic, theories with the Universe, no matter that our knowledge of theories compatible with Nature may be incomplete. A definitive Bell test, free of loopholes, is yet to be realised. Nevertheless, the overwhelming consensus is that the correlations predicted by quantum theory have been verified since the common loopholes of locality \cite{1} and detection \cite{4} have been closed separately \cite{5,6}, and Nature would be strange indeed if it conspired to utilise whichever loophole were available in order to mask its classicality.

This violation of a Bell inequality as a proof technique has since been elevated to the central tool in proving the absence of an eavesdropper \cite{7}, being so powerful as to secure cryptographic schemes that need not be reliant on either the quantum theory that inspired them \cite{8} or complete knowledge of the devices used to implement them \cite{9}. This device-independent cryptography seeks security even when the users' devices are assumed to be controlled by an adversary, for tasks such as key distribution (see \cite{9} and references therein) or randomness expansion \cite{10}. However, this also elevates the stringent requirements of loophole closure; an adversary will certainly conspire to use every tool available to mask the classicality induced by their eavesdropping. This includes subverting not only the detectors and any locality weaknesses, but also corrupting any other tools the cryptographers might import into their laboratory, such as random number generators (RNGs). This corruption must be quite specific, since the choices of input to the Bell test made by the RNGs should still give the experimenters, ignorant of any adversarial involvement, the impression of perfect randomness.

Such corruption necessitates the study of `free will' loopholes \cite{11,12,13,14,15} in which the random numbers are not perfectly random, and an eavesdropper can use that knowledge to modify her strategy. The RNGs are characterised by an appropriate measure of the experimenters' free will in choosing their measurements, also known as measurement independence, though here we will use the term measurement dependence (MD) for reasons apparent in the definitions below. While there is no known way to experimentally determine this value, it remains important to understand, for a prescribed degree of MD, how much advantage can be gained by an adversary (or was gained previously), or how to exclude such influences. The latter question has recently been addressed in the form of randomness amplification protocols \cite{16,17,18} in which a random input string with a given MD is processed into a new random string about which any adversary has less information. These studies have either assumed no correlations between different runs of the experiment \cite{15,19} or \cite{16,17,18} restricted the probability distributions to be of a very specific form known as a Santha-Vazirani source \cite{20}, which requires for a source of bits $z_i \in \{0,1\}$ that for some $\epsilon>0$ and any $n$, \begin{equation*} \half - \epsilon \leq p(z_n|\lambda, z_1,\ldots,z_{n-1}) \leq \half + \epsilon \end{equation*} where $\lambda$ encapsulates a local variable influencing the source. Clearly, a positive lower bound on such probabilities prevents the predetermined exclusion of a measurement choice. Since this exclusion of a measurement choice is intimately involved with the optimal cheating strategy, other measures of MD could exhibit substantially different behaviour. Furthermore, whilst the users of the devices view the two RNGs as separate entities, and the Santha-Vazirani specification is for individual RNGs, an eavesdropper who is preprogramming this has arbitrary access to program them as she wishes, and so effectively considers them as one joint entity.

In this paper, we assign the experimenters a fixed degree of MD in making their measurement choices, using a measure not constrained by the Santha-Vazirani condition, and determine the maximum value that a classical strategy could possibly achieve in the Bell test, comparing that to the standard threshold of a Bell test. This has previously been examined with regards to attacking single runs of a CHSH test \cite{13}, more general Bell tests \cite{15}, and its application to randomness expansion \cite{19}. We show that an eavesdropper gains an advantage by correlating the partially random generation of measurement choices over many runs of the test. Our results explicitly describe the optimal correlations between RNGs and measurement devices that an adversary might introduce. We compare this strategy to the optimal quantum strategy, allowing us to prove that when there is a small (yet non-minimal) amount of MD, a sufficiently high Bell violation will exclude the possibility of a classical correlated attack. We outline our methods and analytic bounds for the bipartite Bell test due to Clauser, Horne, Shimony and Holt (CHSH) \cite{21}, and discuss other Bell tests which are equally amenable to the same numerical analysis.

\section{II. Measures of Measurement Dependence} The CHSH test consists of two parties, Alice and Bob, making random choices $j,k\in \{0,1\}$, corresponding to making one of two measurements, $A_j,\,B_k$ and obtaining outcomes, $a_j ,b_k\in \{\pm 1\}$. After recording the result of each measurement, they communicate in order to calculate
$$
\langle S\rangle=\langle a_0b_0\rangle+\langle a_0b_1\rangle+\langle a_1b_0\rangle-\langle a_1b_1\rangle.
$$
Assuming they have perfect RNGs, each measurement is equally likely, and the expected value of each measurement result (independent of what the other party measured) is 0. In the cryptographic scenario, Alice and Bob are trying to use this to prove that they have a quantum resource, the ideal result being $\langle S\rangle=2\sqrt{2}$, although no value $|\langle S\rangle|>2$ can be explained by somebody entirely replacing the quantum functionality of the box with a deterministic protocol. This changes, however, if the eavesdropper can manipulate the random measurement choices. Nevertheless, we will constrain the eavesdropper to strategies which, on average, cause each measurement choice to be equally likely, and each measurement outcome to be equally likely, otherwise Alice and Bob would soon spot that something was going on!

We are interested in assessing the performance of an eavesdropper when they are allowed to pre-program both the measurement devices and the RNGs of the two parties. Their collective strategy is a program, known as a local hidden variable (LHV) model, that the eavesdropper designs; a random variable $x\in \mathcal{X}$ specifying how to select the measurement bases and the corresponding results. The degree of control that the eavesdropper has over the choice of measurement basis is contained in the probabilities of making a given choice, $p(A_j,B_k|x)$. Numerous different ways have been proposed to assign a numerical value based on this \cite{13,14,15,19}. Perhaps the most natural class of measures are those that can be interpreted as the advantage that is gained by the eavesdropper's knowledge of the probability distribution as compared to that of Alice and Bob's:
$$
M_p=\max_{x\in \mathcal{X}}\left( \sum_{j,k}|p(A_j,B_k|x)-p(A_j,B_k)|^p\right)^{1/p},
$$
particularly for $p=1$. We require that the marginal distributions representing the measurement choices are uniform, $p(A_j,B_k)=\sum_x p(x)p(A_j,B_k|x)=\frac{1}{4}$, as is typically the case in a CHSH test, although there may be advantages to the cryptographers to lifting this expectation \cite{15}. Whilst the $p=1$ norm is amenable to analysis using linear programming \cite{22}, the choice of $p=\infty$ generalises to the multiple run case more readily and seems natural with its prominent ties to the min-entropy measure that is useful in cryptographic scenarios. This is seen by
\begin{equation*}
\lim_{p\rightarrow \infty} M_p= \max_{x,j,k} |p(A_j,B_k|x)-p(A_j,B_k)| = P - \frac{1}{4},
\end{equation*}
where $P$ is the maximum probability
\begin{equation*}
P:= \max_{j,k,x} p(A_j,B_k|x),
\end{equation*}
since the maximum is obtained for $p(A_j,B_k|x)>1/4$ (this will always be true in the large run limit in the parameter regime that is interesting for operation). This measure was introduced in \cite{19}, with $P=1$ representing the possibility of an entirely deterministic selection with no free will, and $P=\frac{1}{4}$ delivering the uniform measurement selections Alice and Bob expect to observe. The techniques presented here can also be applied to the measure introduced by Hall \cite{13}, subject to some minor technical adjustments. In contrast to the previous terminology, we say that these measures characterise measurement dependence (rather than independence), since a model with a higher evaluation displays measurement selections that are more dependent on the underlying variables.

\section{III. Optimal One-Shot Attack} We focus on maximising the score of a CHSH game subject to a fixed MD $P$. The measurement settings are $A_j ,B_k$ with $j,k\in \{0,1 \}$ and the outcomes are deterministically specified by underlying variables $x$, i.e.\ $a_j(x),b_k(x)\in\{\pm 1\}$, such that the game evaluates
\begin{equation} \label{CHSHscore}
S=4\sum_{x} p(x)\sum_{j,k\in \{0,1\}} p(A_j,B_k|x ) (-1)^{jk} a_j(x) b_k(x).
\end{equation}
This score is related to the probability of winning a single round of the CHSH game by $p_{\mathrm{win}}= (1+S/4)/2$. A single run of the experiment is said to give a `correct answer' to the CHSH game if either $(j,k)\neq (1,1)$ and the outcomes are equal, or $(j,k)=(1,1)$ and the outcomes are different. From a choice of 16 distinct outcome sets, half achieve the maximum local CHSH score, giving only one incorrect answer for the four possible query pairs. These are given by the four variables in Table \ref{CHSHoutcomestable}, along with their conjugates that specify the negative outcomes (these should be used half of the time to avoid suspicion of fixed outcomes, but do not affect correlations and have been suppressed from the calculations for simplicity).
\begin{table}[!tb]
\begin{center}
\begin{tabular}{c|cccc}
$x$ & $a_0$ & $a_1$ & $b_0$ & $b_1$ \\ \hline
$0$ & $1$ & $1$ & $1$ & $1$ \\
$1$ & $1$ & $-1$ & $1$ & $1$ \\
$2$ & $1$ & $1$ & $1$ & $-1$ \\
$3$ & $1$ & $-1$ & $-1$ & $1$ \\
\end{tabular}
\end{center}
\vspace{-0.5cm}
\caption{Outcomes specified by an underlying variable $x$ \label{CHSHoutcomestable}}
\vspace{-0.5cm}
\end{table}

The adversarial strategy selects each $x\in \{0,1,2,3\}$ with probability $1/4$, uniquely defining a set of predetermined outcomes, followed by the measurement choices $j$,$k$ to be used, represented by $y= 2j+k\in \{ 0,1,2,3\}$ so that the conditional probabilities may be rewritten as $p(y|x)$, and the definition of MD as $P=\max_{x,y} p(y|x)$. We seek to maximise the CHSH score \eqref{CHSHscore}
\begin{equation}
S=4-2\sum_{x+y=3} p(y|x) ,
\end{equation} subject to a fixed degree of MD $P$ and Bayes' theorem
\begin{equation}
\sum_x p(x)p(y|x)=p(y) \qquad \forall y ,\label{Bayescondition}
\end{equation}
which reduces to $\sum_x p(y|x) =1$ by the assertion that $p(x)=p(y)=1/4$. The optimal strategy has been derived using both Hall's measure \cite{13,23} and $P$ \cite{19}. The latter takes $P\in [1/4 ,1/3]$ such that $p(y|x)=P$ if $x+y\neq 3$ (i.e.\ correct answer) and $p(y|x)=1-3P$ otherwise, yielding $S=24P-4$ up to the threshold value $P=1/3$ that achieves maximum $S=4$. Optimality derives directly from the reasoning outlined below for the general case.

\section{IV. Multiple Runs}
\subsection{A. Classical Adversary}
Under the usual assumption of perfect measurement independence, the best classical eavesdropping strategy acts independently on each run of the experiment \cite{24}. A limited MD scenario necessitates re-investigation of the possible types of attack, since Eve may use her extra knowledge of the underlying system's imperfections to correlate her strategy appropriately. This is similar to LHV models which exploit imperfect photon detection rates by simulating a detection failure and changing the output strategy accordingly \cite{25}.

Assume, as before, that Eve has written a program that tells Alice's and Bob's devices what to do (using the local measurement settings as inputs), now encoding instructions for blocks of $N$ runs together. If Eve seeks to optimise the CHSH score, the outcomes for each run must still be drawn from Table \ref{CHSHoutcomestable}. The $N$-run system is fully characterised by the strings $\bs{x},\bs{y}\in \{0,1,2,3 \}^N$ such that $x_n$ and $y_n$ denote respectively the outcome assignments and pair of measurement settings for the $n$th run. The definition of $P$ intuitively extends to an $N$-run model by considering all possible combinations of measurement settings and underlying outcome specifications, \be P_{(N)}:= \max_{\bs{x},\bs{y}} p(\bs{y}|\bs{x}). \ee

It is clear that the optimal single-shot strategy above, when repeated independently for $N$ runs, has MD $P_{(N)}=P^N \in [4^{-N},3^{-N}]$. The optimal correlated attack will obviously perform as well or better than this, and comparison can be made with the repeated single-shot strategy over varying $N$ by taking the $N$th root, thus a fixed MD $P$ requires that $p(\bs{y}|\bs{x})\leq P^N$ for all $\bs{x},\bs{y}$.

The one-shot attack distinguishes which $(x_n ,y_n)$-pairs give a correct answer; the extension to $N$ runs asks \emph{how many} answers for a pair $(\bs{x},\bs{y})$ are correct, $k(\bs{x},\bs{y}):=\sum_{n=1}^N \delta_{x_n +y_n \neq 3}$. The average CHSH score is then 
\begin{equation}
S=-4+8\sum_{\bs{x}} p(\bs{x}) S_{\bs{x}} \qquad S_{\bs{x}} :=\frac{1}{N} \sum_{\bs{y}} p(\bs{y}|\bs{x})k(\bs{x},\bs{y}). \label{Sxydef}
\end{equation}
We wish to maximise $S$ subject to $p(\bs{x})=p(\bs{y})=4^{-N}$, Bayes' rule condition \be \sum_{\bs{x}} p(\bs{x})p(\bs{y}|\bs{x})=p(\bs{y}) \qquad \forall \bs{y} \label{BayesconditionN}\ee and the limited MD constraint $p(\bs{y}|\bs{x})\leq P^N$ for all $\bs{x},\bs{y}$. $S_{\bs{x}}$ may be rewritten as
\begin{equation*}
S_{\bs{x}}=\frac{1}{N}\sum_{k=0}^N kp_{k}^{\bs{x}} \qquad p_{k}^{\bs{x}}:=\sum_{\bs{y}:k(\bs{x},\bs{y})=k}p(\bs{y}|\bs{x}).
\end{equation*}
Since $p(\bs{y}|\bs{x})$ can be individually varied, the optimisation can be made on each $S_{\bs{x}}$ separately. The outcome specifications obtained from extending Table I to $N$ runs exhibit the following relation for any $k,\bs{x},\bs{x'}$:
\begin{equation*} \#\{\bs{y}:k(\bs{x},\bs{y})=k\}=\#\{\bs{y}:k(\bs{x'},\bs{y})=k\}=\binom{N}{k}3^k .\end{equation*} 
Thus, redistribution of the probabilities $p(\bs{y}|\bs{x})$ over any $\bs{y}$ for which $k(\bs{x},\bs{y})=k$ will have no effect on the maximisation. Optimisation of S corresponds to independent optimisation over each of the $S_x$. Since these quantities are the same for all $\bs{x},\bs{x'}$, it is evident that they have the same optimum, $p^{\bs{x}}_k=p^{\bs{x'}}_k$, so we may remove the $\bs{x}$ dependence, defining $p_k$ such that $p_{k}^{\bs{x}}=p_k \binom{N}{k}3^k$, thus \eqref{Sxydef} becomes
\be \label{Sinpkterms} S=\frac{8}{N} \sum_{k=0}^N k p_k 3^k \binom{N}{k} - 4 .\ee
The probabilities are also subject to Bayes' rule, condition \eqref{BayesconditionN} which, by the assertion that $p(\bs{x})=p(\bs{y})=4^{-N}$, reduces to
\be \label{norminpkterms} \sum_{k=0}^N p_k 3^k \binom{N}{k} = 1 ,\ee
whilst fixed MD $P$ requires that $p_k \leq P^N$ for all $k$. This problem can be solved by linear programming, which confirms the following argument.

Intuitively for a CHSH test, obtaining the maximum $S$ value requires more weight to be given to the pairs of measurement settings that answer a larger proportion of the $N$ queries correctly. Therefore we assign $p_N = P^N$. If the normalisation \eqref{norminpkterms} allows, set $p_{N-1}=P^N$, and so on. All remaining $p_k$ are set to 0. The curve of maximum $S$ against $P^N$ is piecewise linear, connected by the $N+1$ points defined by a parameter $l'$ such that
\bea \label{Pinlterms} P &=& \left( \sum_{k=l'}^N 3^k \binom{N}{k} \right)^{-1/N}, \\ \label{Sinlterms} S &=& \frac{8P^N}{N} \sum_{k=l'}^N k3^k \binom{N}{k}-4 .
\eea
The curve is linearly interpolated for $P^N$, where $P\in [1/4,1/3]$, between these points by assigning $p_k=0$ for $k<l'-1$, $p_k=P^N$ for $k\geq l'$, and letting $p_{l'-1}$ fulfil the normalisation \eqref{norminpkterms}. Figure \ref{fig:main} shows such plots for various finite values of $N$. We see immediately that the eavesdropper gains an advantage with increasing $N$, outperforming the single-shot attack. For certain sequences of measurement choices in the optimal strategy, such as those that have already given $N-l'$ wrong answers, the impossibility of certain measurement choices is perfectly predictable. Nevertheless, knowledge of the hidden variable $x$ makes knowledge of previous measurement choices irrelevant to the cheating strategy.

\begin{figure}[!tb]
\begin{center}
\includegraphics[width=3.4in,natwidth=5.71in,natheight=3.48in]{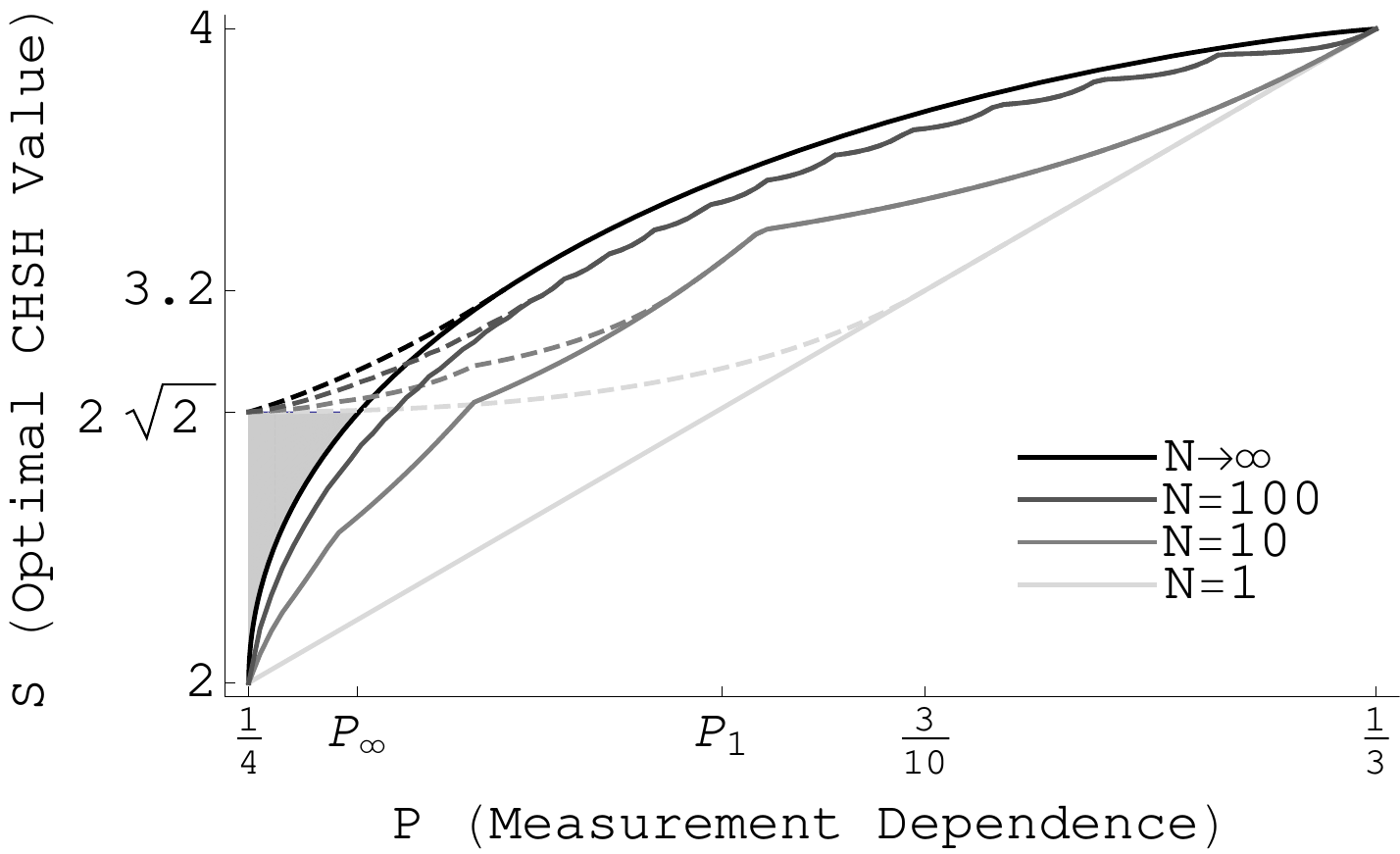}
\end{center}
\vspace{-0.5cm}
\caption{The maximal CHSH expectation value $S$ for given MD parameter $P$ with varying numbers of runs $N$ across which measurement basis choices are correlated. Strategies are classical (solid lines) or quantum (dashed lines). The black lines are the correlated strategies in the limit $N\rightarrow \infty$, therefore upper bounds for all finite $N$-run strategies. The shaded region indicates achievable CHSH violations using quantum technology and perfect measurement independence.} \label{fig:main}
\vspace{-0.5cm}
\end{figure}

The optimal strategy's behaviour in the large $N$ limit is less clear. Is the limiting curve simply the $S=4$ line, making perfect measurement independence the singular point at which the CHSH test functions? The following theorem answers this in the negative.

\begin{theorem}
The measurement dependence $P$ required to simulate a CHSH score $S$ with a deterministic strategy correlated over $N$ runs, in the $N\rightarrow \infty$ limit, has the lower bound
\be P\geq \left(\frac{4+S}{24}\right)^{\frac{4+S}{8}} \left(\frac{4-S}{8}\right)^{\frac{4-S}{8}} \label{PSbound}. \ee
\end{theorem}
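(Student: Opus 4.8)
The plan is to extract the bound directly from the reduced linear program set up above, sidestepping any delicate analysis of the finite‑$N$ curve. After the symmetry reductions, a correlated $N$‑run strategy with measurement dependence $P$ is described by numbers $p_0,\dots,p_N\ge 0$ obeying the normalisation \eqref{norminpkterms}, $\sum_{k=0}^N 3^k\binom Nk p_k=1$, the score relation \eqref{Sinpkterms}, $S=\frac8N\sum_{k=0}^N k\,3^k\binom Nk p_k-4$, and the MD constraint $p_k\le P^N$ for every $k$. The observation that does all the work is that $q_k:=3^k\binom Nk p_k$ is a genuine probability distribution on $\{0,1,\dots,N\}$: it is nonnegative and, by \eqref{norminpkterms}, sums to $1$. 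If $K$ denotes a random variable with this law, then \eqref{Sinpkterms} says precisely that $\av K=\frac N8(S+4)$.

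Next I would run a Chernoff‑type estimate. For any real $\theta$, convexity of $x\mapsto e^{\theta x}$ and Jensen's inequality give $\sum_k q_k e^{\theta k}\ge e^{\theta \av K}=e^{\theta N(S+4)/8}$. On the other hand, bounding each $p_k$ by $P^N$ and summing via the binomial theorem, $\sum_k q_k e^{\theta k}=\sum_k 3^k\binom Nk p_k e^{\theta k}\le P^N\sum_k\binom Nk(3e^\theta)^k=P^N(1+3e^\theta)^N$. Combining the two and taking $N$th roots yields $P\ge e^{\theta(S+4)/8}/(1+3e^\theta)$, valid for every $\theta$ and, notably, every $N$. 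Optimising the right‑hand side over $\theta$ — the stationarity condition is $3e^\theta=(S+4)/(4-S)$, whence $1+3e^\theta=8/(4-S)$ — and simplifying the resulting powers (using $\frac{S+4}8+\frac{4-S}8=1$ to collect the factors of $8$, and $\ln 24=\ln 3+\ln 8$) reproduces precisely \eqref{PSbound}.

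Since this inequality holds for all $N$, it holds a fortiori in the $N\to\infty$ limit; for completeness I would also verify it is attained asymptotically, by inserting $l'=\mu N$ with $\mu=(S+4)/8$ into \eqref{Pinlterms}--\eqref{Sinlterms} and applying Stirling to $3^{\mu N}\binom{N}{\mu N}$, which dominates both tail sums, so the limiting curve is exactly the boundary of \eqref{PSbound} and the bound is tight.

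I do not expect a serious obstacle: the conceptual move — recognising $3^k\binom Nk p_k$ as a probability distribution, which converts the optimisation into a one‑line moment bound — is the whole point, and once made the rest is calculus. The only places to be careful are (i) confirming that after the reductions of Section IV.A the MD constraint really reads $p_k\le P^N$, so that $\max_j p_j\le P^N$ as used above, and (ii) the bookkeeping in matching the optimised exponential bound to the stated closed form. An alternative route, closer to the paper's own construction, would take the limit directly in \eqref{Pinlterms}--\eqref{Sinlterms}; there the mild obstacle is the asymptotic evaluation of $\sum_{k\ge l'}3^k\binom Nk$ and $\sum_{k\ge l'}k\,3^k\binom Nk$, a routine Laplace/Stirling computation.
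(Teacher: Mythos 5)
Your proof is correct, and it takes a genuinely different route from the paper's. The paper works with the optimal strategy itself: it restricts to the $N+1$ corner points \eqref{Pinlterms}--\eqref{Sinlterms}, rewrites \eqref{Pinlterms} as $P=\frac14\Pr(X\geq l')^{-1/N}$ for a binomial $X$ with success probability $3/4$, applies an additive Chernoff bound to obtain $P\geq(l/3)^l(1-l)^{1-l}$, and separately upper-bounds $S\leq 8(l+\epsilon)-4$ asymptotically by splitting the sum in \eqref{Sinlterms} at $\alpha N=(l+\epsilon)N$; eliminating $l=(4+S)/8$ then gives \eqref{PSbound}. You instead bound \emph{every} feasible point of the reduced linear program at once: recognising $q_k=3^k\binom{N}{k}p_k$ as a probability law whose mean is pinned to $N(S+4)/8$ by \eqref{Sinpkterms}, and sandwiching $\sum_k q_ke^{\theta k}$ between $e^{\theta N(S+4)/8}$ (Jensen) and $P^N(1+3e^\theta)^N$ (the constraint $p_k\leq P^N$ plus the binomial theorem), gives $P\geq e^{\theta(S+4)/8}/(1+3e^\theta)$ for all $\theta$; optimising over $\theta$ is exactly the Legendre-dual form of the paper's Chernoff step, and your algebra checks out ($3e^\theta=(4+S)/(4-S)$, $1+3e^\theta=8/(4-S)$ reproduce \eqref{PSbound} precisely). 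Your version buys two things: the bound holds for every finite $N$, not merely in the limit, so the theorem's restriction to $N\to\infty$ is unnecessary; and it needs neither the identification of the optimal corner points nor the $\alpha=l+\epsilon$ splitting and limit interchange. The one hypothesis you rightly flag -- that the symmetrised $p_k$ still obey $p_k\leq P^N$ -- does hold, since averaging $p(\bs{y}|\bs{x})$ over the $\binom{N}{k}3^k$ strings $\bs{y}$ with $k(\bs{x},\bs{y})=k$ preserves both the score and the pointwise bound $P^N$. Your tightness check via $l'=\mu N$, $\mu=(S+4)/8$ in \eqref{Pinlterms}--\eqref{Sinlterms} is the same as the paper's and is routine.
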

\begin{proof} It is enough to consider the $N+1$ points defined by \eqref{Pinlterms} and \eqref{Sinlterms} and find lower/upper bounds for $P$/$S$ respectively as functions of the rescaled parameter $l=l'/N \in [3/4,1]$. To bound $P$, observe that $P= \frac{1}{4} \Pr(X \geq l')^{-1/N}$ where $X$ is a binomial distribution with $N$ trials and success probability $3/4$. The additive form of the Chernoff bound easily recovers 
\be \label{Plbound}
P \geq ( l/3 )^l (1-l)^{1-l}.
\ee
We aim to bound $S$ by dividing the region $k=l'\ldots N$ into two runs, split at $\alpha N$, such that
\begin{eqnarray*}
R_1 &:=& \sum_{k=l'}^{\alpha N-1}3^k\binom{N}{k} \geq 3^{lN}\binom{N}{lN} \geq 3^{lN}e^{NH(l)}, \\
R_2 &:=& \sum_{\alpha N}^N 3^k\binom{N}{k} \leq 4^Ne^{-ND(\alpha\|\frac{3}{4})},
\end{eqnarray*}
where we have used Stirling's approximation and a Chernoff bound respectively. $H(l)=-l \log_2 l -(1-l)\log_2(1-l)$ is the binary entropy while $$ D(\alpha\|\beta) = \alpha \ln \left(\frac{\alpha}{\beta}\right) + (1-\alpha)\ln \left(\frac{1-\alpha}{1-\beta}\right) $$ is the Kullback-Leibler divergence between Bernoulli distributed random variables with parameters $\alpha$ and $\beta$. For any $\alpha=l+\epsilon$ with finite (but small) $\epsilon$,
$$
\frac{R_2}{R_1}\leq\frac{4^Ne^{-ND(\alpha\|\frac{3}{4})}}{3^{lN}e^{NH(l)}}
$$
is exponentially small in $\epsilon N$. For increasing $\frac{R_2}{R_1}$,
$$
S\leq 8\frac{\alpha R_1+R_2}{R_1+R_2}-4
$$
is increasing. In the large $N$ limit this yields
\be \label{Slbound}
S\leq 8(l+\epsilon)-4.
\ee
The bound is tight since a lower bound for \eqref{Sinlterms} is given by \be S\geq 8\frac{P^Nl'}{N}\sum_{k=l'}^N 3^k \dbinom{N}{k} -4=8l-4, \ee substituting the definition of $P$ in \eqref{Pinlterms}. Substitution for $l$ in the bounds \eqref{Plbound} and \eqref{Slbound} yields the result.
\end{proof}
The bound may alternatively be expressed in terms of the CHSH winning probability $p_{\mathrm{win}}$ as \be P\geq \left(\frac{p_{\mathrm{win}}}{3}\right)^{p_{\mathrm{win}}} (1-p_{\mathrm{win}})^{1-p_{\mathrm{win}}}. \ee
The regime of allowable correlated LHV attacks described by Theorem 1 is depicted by the solid lines in Figure \ref{fig:main}. The points $P_{\infty}\approx 0.258$ and $P_1 \approx 0.285$ are where a CHSH value of $S=2\sqrt{2}$ can be achieved with (infinitely) correlated strategies and single shot attacks respectively. In the shaded region (for $P<P_{\infty}$), the $S$ values are below $2\sqrt{2}$ and therefore achievable with quantum technologies, and yet can never be simulated by any arbitrarily correlated attack without detection from Alice and Bob in their observed measurement choices.

\subsection{B. Quantum adversary} Tsirelson's bound proves that a test of the CHSH inequality with perfect measurement independence has a maximum possible value of $2\sqrt{2}$, for instance when the experimenters perform quantum measurements on a singlet state \cite{26}. Such quantum strategies are known to be vulnerable to a limited-free-will attack in the single shot case by optimising over both the measurement input distribution and the measurement operators themselves \cite{19}. Further correlating over $N$ runs of the experiment can also enhance this quantum strategy. 

We make the common assumption that each run is causally disconnected \cite{27,28}, which ensures that the measurement choices for each CHSH test, given knowledge of the hidden variable, are otherwise independent, i.e.\ measurement choices at distant locations are not known beyond the extent to which it is implied by knowledge of the hidden variable. The causal independence also enforces that the effective CHSH operator being optimised has the local form \begin{equation*}\mathcal{S}_N:=\mathcal{S}\otimes \mathbb{1} \otimes \mathbb{1} \ldots+\mathbb{1} \otimes \mathcal{S}\otimes \mathbb{1} \ldots+ \mathbb{1} \otimes \mathbb{1} \otimes \mathcal{S}\ldots+\ldots .\end{equation*}Alice and Bob can only perform local measurements, thus use of an entangled state between the devices over multiple runs does not provide an advantage. The state that optimises this operator is separable, corresponding to a tensor product of the states that optimise $\av{\mathcal{S}}$ in \eqref{CHSHquantumscore}, such that $|\av{\mathcal{S}_N}|=N|\av{\mathcal{S}}|$, so it is enough to optimise over the single-shot operator $\mathcal{S}$. 

For a specified value of $P$, determining the optimal quantum strategy (which will typically correspond to no eavesdropping) will yield the maximum realisable CHSH value, which is important in bounding an eavesdropper's knowledge for a given CHSH value.

Using the notation $p_{\bs{y}}$ for the probabilities of a given set of $N$ measurement settings described by a string $\bs{y}\in \{0,1,2,3\}^N$, the aim is to maximise the expectation
\begin{equation}
\av{\mathcal{S}}=\frac{4}{N}\sum_{n=1}^N \left\langle p_0^n A_0B_0+p_1^n A_0B_1+p_2^n A_1B_0-p_3^n A_1B_1 \right\rangle \label{CHSHquantumscore}
\end{equation}
where, by our causality assumption, $p_m^n = \sum_{\bs{y}:y_n=m}p_{\bs{y}}$ are the marginal probabilities for a given setting $m$ being chosen on run $n$, and the measurement operators $A_j$ and $B_k$ are two-valued operators. Bounds on $\av{\mathcal{S}}$ subject to fixed $P$ have been derived for an individual run in \cite{19}. Using the same symmetry arguments as above, we reduce the set $\{p_{\bs{y}}\}$ to $\{p_k\}$ according to the number of correct answers each set $\bs{y}$ gives. This leaves $p_0^n=p_1^n=p_2^n=R$ and $p_3^n=1-3R$ where, through the symmetry-based reduction, $R$ is simply
$$
R=\sum_{k=1}^N3^{k-1}\frac{k}{N}\binom{N}{k}p_k
$$
which is equivalent to \eqref{Sinpkterms} up to constant factors. Therefore, for a fixed degree of MD requiring $p_k\leq P^N\;\;\forall k$, $R$ is optimised by the same input distribution (see Eq.\ (\ref{Pinlterms})). The achievable expectation values are then
$$
|\av{\mathcal{S}}|\leq\frac{4(1-2R)^{3/2}}{\sqrt{1-3R}} ,
$$
provided $R<3/10$ \cite{19}. This maximal value $S_Q$ can be compared with the value $S_C= 4(6R-1)$ obtained by the optimal classical, deterministic strategy as \be S_Q = \frac{2(8-S_C)^{3/2}}{3\sqrt{6(4-S_C)}} \ee when $S_C< 16/5$. For $R\geq 3/10$, hence $S_C\geq 16/5$, the two strategies coincide. The quantum strategies are compared with their classical equivalents in Figure \ref{fig:main}.

\section{V. Numerical Approach to Other Bell Inequalities} The analytical results for correlated strategies to the CHSH test above are not easily replicable for more complex Bell tests, since there are complications when a Bell test does not require the correlations produced by every possible measurement pair to evaluate the score. Nevertheless, as alluded to in the definitions, optimal finite $N$-run strategies can be determined numerically for a given measure of MD using linear programming. We briefly outline the approach for the measure $M_1$ (since $P$ is simpler), for both the CHSH test and its well-known generalisation to the class of $m$ setting, 2 outcome tests, $I_{mm22}$ \cite{29}.

For the CHSH test, the aim is to maximise $S$ as in \eqref{Sinpkterms} subject to normalisation \eqref{norminpkterms} and fixed MD $M_1$. By assuming $p(\bs{x})=4^{-N}$, the normalisation ensures $\sum_{\bs{x}} p(\bs{y}|\bs{x})=1 $ for each $\bs{y}$, whilst $M_1$ is given by \begin{equation} M_1 = \max_{\bs{x}} M_{\bs{x}} \qquad M_{\bs{x}}:=\sum_{\bs{y}} \left|p(\bs{y}|\bs{x})-\frac{1}{4^N}\right| \label{Mxydef}\end{equation}
We may again reduce the conditional probabilities to the $N+1$-element vector $\bs{p}=( p_k )$ by considering the following symmetry argument. Observe that to maximise $S$ for fixed $M_1$, all the $S_{\bs{x}}$ will be equal, which in turn means that all $M_{\bs{x}}$ will be equal. \eqref{Mxydef} reduces to \be M_1 = \sum_{k=0}^N \dbinom{N}{k}3^k \left|p_k -\frac{1}{4^N}\right| \label{MKpkdef}.\ee
Introducing new non-negative variables $\bs{w}$ such that \begin{equation} \label{wvariables} w_k \geq p_k - 4^{-N} \qquad , \qquad w_k \geq 4^{-N} -p_k \end{equation} allows removal of the modulus to create a standard linear programming problem by converting all inequalities to a statement on non-negative variables, i.e.\ introduce variables $\bs{a}$, $\bs{b}$ defined by $a_k :=w_k -p_k +4^{-N}$ and $b_k :=w_k +p_k -4^{-N}$ and set $\bs{a}\geq 0$, $\bs{b} \geq 0$ to represent $\eqref{wvariables}$. The measurement independence condition \eqref{MKpkdef} is then expressed as \begin{equation} \label{MKwkcondition} M_1 = \sum_{k=0}^N \binom{N}{k} 3^k w_k .\end{equation}

We have now reduced to the linear problem
\begin{eqnarray*}
\mathrm{minimize} & \bs{c}\cdot \bs{z} \\
\mathrm{subject \ to} & B\cdot \bs{z} = \bs{v} , \bs{z}\geq 0
\end{eqnarray*}
where $\bs{z}$ is a $4(N+1)$-element vector with the block structure  $\bs{z}^T=(\begin{array}{llll} \bs{p} & \bs{w} & \bs{a} & \bs{b} \end{array})$ and $\bs{c}^T=(\begin{array}{llll} -\bs{s} & 0 & 0 & 0 \end{array})$, where $s_k :=3^k \binom{N-1}{k-1}$, is used to maximise $S=-4-8 \bs{c}\cdot \bs{z}$. The normalisation constraint, the definitions of $\bs{a}$ and $\bs{b}$ and the measurement independence constraint \eqref{MKwkcondition} are found within
$$ B=\left( \begin{array}{cccc} 
\bs{n} & 0 & 0 & 0\\
\mathbb{1} & -\mathbb{1} & \mathbb{1} & 0\\
-\mathbb{1} & -\mathbb{1} & 0 & \mathbb{1} \\
0 & \bs{n} & 0 & 0\\
\end{array} \right) \qquad , \qquad \bs{v} = \left( \begin{array}{c} 1 \\ 4^{-N} \\ -4^{-N} \\ \frac{1}{N_1}M_K \\ \end{array} \right) $$
where $n_k:= \dbinom{N}{k}3^k$.

The optimal single-shot attack ($N=1$) under the measure $M_1$ is in fact the same model as before with parameter $P\in[1/4,1/3]$, where $M_1=3(P-1/4)$ and thus $S=2+8M_1$. We compare the optimal correlated $N$-run attack with $N$ repetitions of the one-shot attack. As with the analysis for $P$, both strategies coincide at the limits $S=2$ and $S=4$, the latter being achieved with MD of $M_{\max}(N):= 2(1-(3/4)^N)$, again with $p_N = 3^{-N}$ and otherwise $p_k =0$. The comparison for $N=100$ is shown in Figure \ref{CHSHsimplexplot}. Unfortunately, the complexity of the $M_1$ measure compared to $P$ makes the derivation of a general bound for all $N$ difficult.
\begin{figure}[!t]
\begin{center}
\includegraphics[width=3in,natwidth=4.88in,natheight=3.46in]{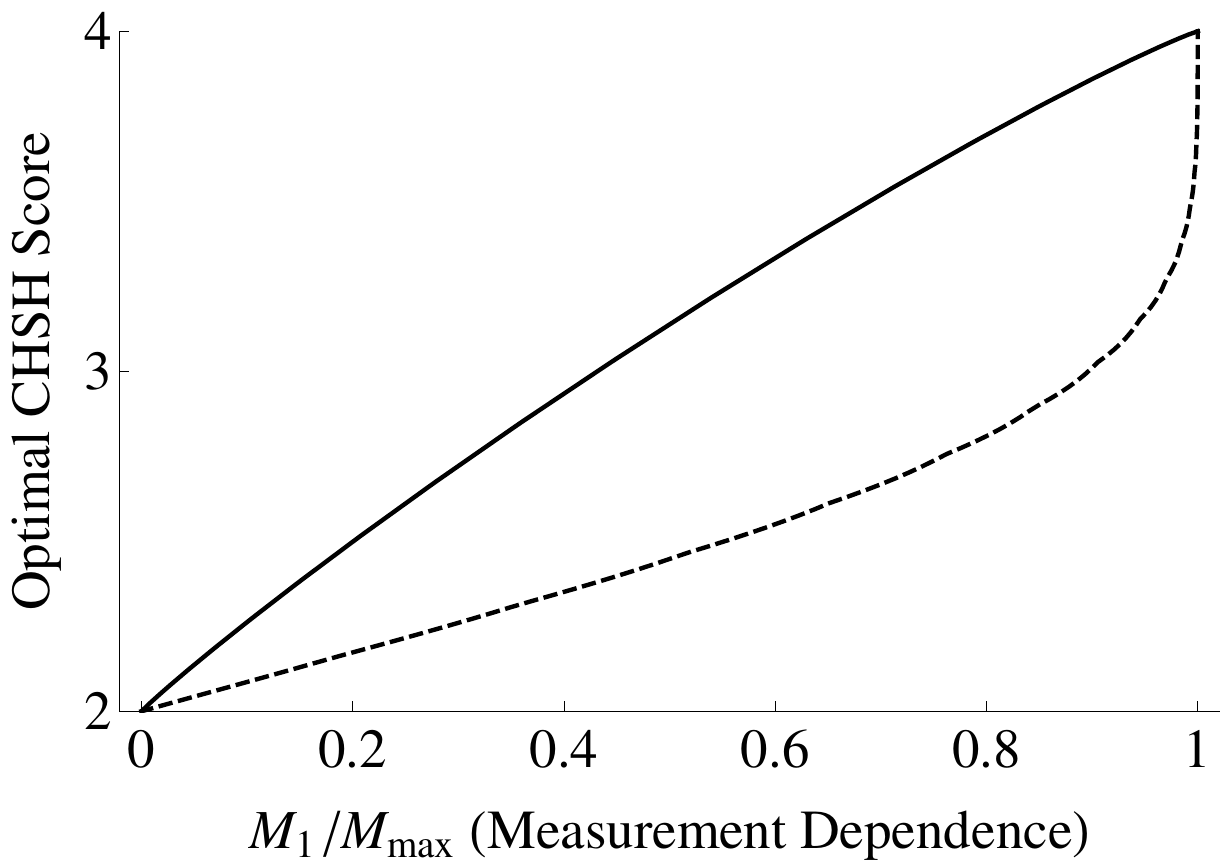}  
\end{center}
\vspace{-0.5cm}
\caption{Comparison of the optimal correlated attack (solid line) with $N$ repetitions of the optimal one-shot attack (dashed line) for $N=100$, using the CHSH inequality. The maximum score $S=4$ can be simulated classically with a measurement dependence of $M_1=M_{\max}:=2(1-(3/4)^N)$.\label{CHSHsimplexplot}}
\vspace{-0.5cm}
\end{figure}

In principle, any Bell test (linear function of the underlying probability distribution) can be expressed in this way. The family of $I_{mm22}$ Bell tests \cite{29}, which are two party, $m$ setting, two outcome ($\pm 1$) tests, benefit from a similar symmetry reduction to that of the CHSH test, corresponding to $m=2$. As an extension of \eqref{CHSHscore}, these tests can similarly be assigned a score \begin{equation} \label{Imm22score} S_{mm22}=m^2 \sum_{x}p(x)\sum_{j,k =0}^{m-1} p(A_j ,B_k |x )\alpha_{jk}^m a_j(x)b_k(x) \end{equation} where $\alpha_{jk}^m$ takes the value $+1$ if $j+k<m$, $-1$ if $j+k=m$ and $0$ if $j+k>m$. 

There are additional complications for $m\geq 3$, however. Firstly, not every pair of correlations is assigned the same weight; some pairs are not used at all, i.e.\ $(j,k)$ for which $\alpha_{jk}^m=0$. Nevertheless, an eavesdropper needs to ensure that such correlations still arise so as not to be suspicious. This means that, upon enumerating the measurement settings with strings $\bs{y}\in \{0,\ldots ,m^2-1\}^N$ and determining the optimal outcome sets $\bs{x}$, the symmetry structure of $p(\bs{y}|\bs{x})$ has two parameters -- $k$, the number of `correct' answers (as before) to correlation pairs that are used, and $l$, the number of unused correlation pairs in the string $\bs{y}$. Hence, we have a set of probabilities $\{p_{k,l}\}$ to optimise over. However, this also means that the constraints \eqref{Bayescondition} only reduce to a set of $N+1$ conditions dependent on how many unused correlation pairs are present for given settings $y$ rather than a single condition, and depending on the choice of distribution $p(\bs{x})$, it may never be possible to satisfy all these conditions. Nevertheless, this choice does not affect optimality; provided a distribution of $p(\bs{x})$ is selected such that the constraints \eqref{Bayescondition} are all fulfilled, that is equally as good as any other satisfying assignment, and the same optimal values can be achieved. Figure \ref{I3322simplexplot} shows a comparative plot for the $I_{3322}$ inequality in which an advantage to the correlated attack is found.
\begin{figure}[!t]
\begin{center}
\includegraphics[width=3in,natwidth=4.88in,natheight=3.46in]{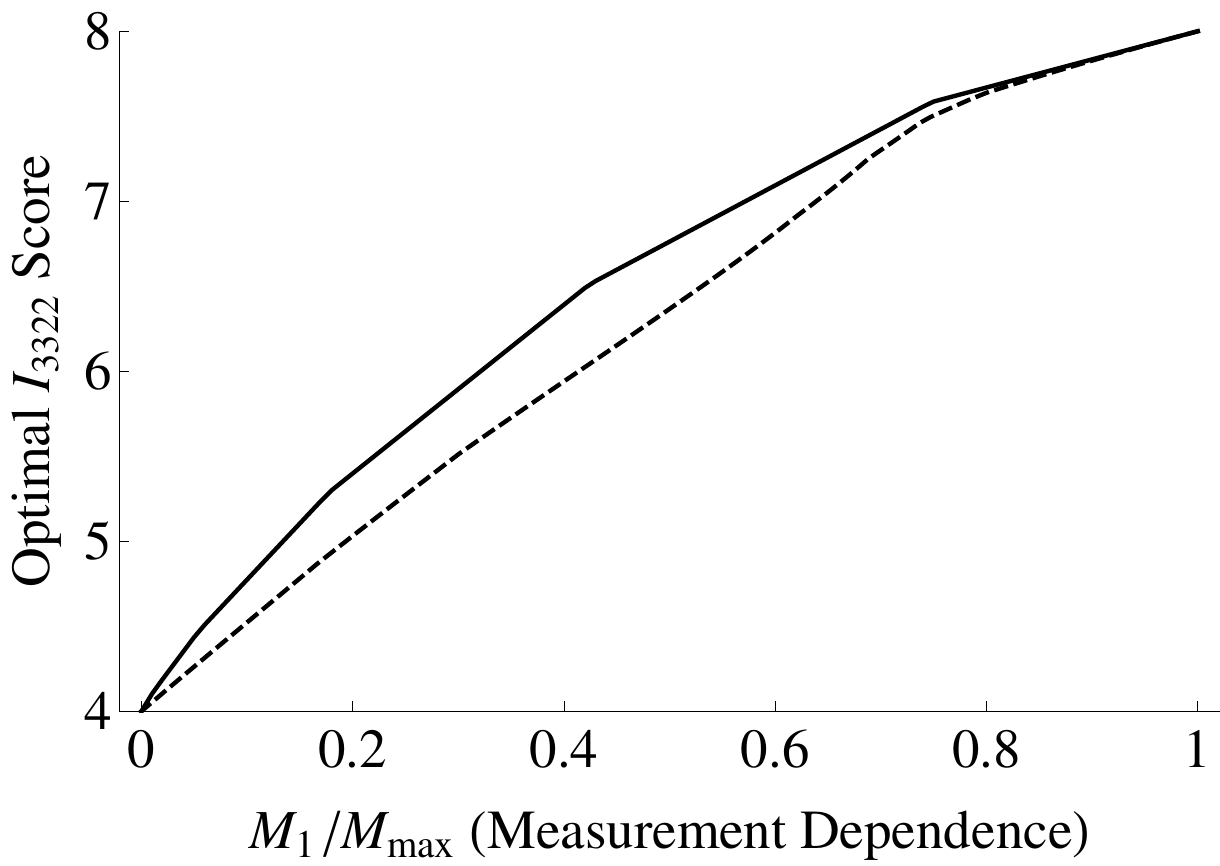}  
\end{center}
\vspace{-0.5cm}
\caption{Comparison of the optimal correlated attack (solid line) with $N$ repetitions of the optimal one-shot attack (dashed line) for $N=10$, using the $I_{3322}$ inequality. The maximum score $S_{3322}=8$ can be simulated classically with a measurement dependence of $M_1=M_{\max}:=2(1-(7/9)^N)$. \label{I3322simplexplot}  }
\vspace{-0.5cm}
\end{figure}

\section{VI. Conclusions}
By manipulating the RNGs used to select measurements in a Bell test in tandem with the devices performing them, an adversary may simulate a Bell violation. The degree of violation can be bounded above in terms of an appropriate measure of MD. Crucially, whilst the adversary gains a significant advantage in employing attacks correlated over many runs of an experiment, as opposed to single-shot attacks, there are still violations which cannot be reproduced by such attacks if the experimenters' degree of MD is sufficiently low. In light of this, existing analyses of the working regimes for device-independent randomness expansion \cite{19} and key distribution protocols could be revised, although the problem of performing privacy amplification without trusted randomness would need addressing. Since many of these utilise the CHSH test, our focus on this test is immediately applicable, while application to other tests is a simple linear programming problem. How to experimentally assess the value of $P$ (or another measure) in a pair of RNGs remains an open question.

While our analysis does not require the RNGs to be Santha-Vazirani sources \cite{20}, as in proposed randomness amplification protocols \cite{16,17,18}, our results can be interpreted in the context of randomness amplification. For a given MD $P$, the optimal quantum strategy, in the regime where quantum beats classical (i.e.\ $S_C<16/5$), gives perfectly random measurement outcomes on one side. If the players could know the value of the hidden variable in a run, and therefore the measurement selection bias, they can implement the optimal quantum strategy, which has the potential to allow perfect amplification (i.e.\ procuring perfectly random bits from partially random bits) in this regime. However, there is no obvious reason why honest players would have such knowledge of the hidden variables.

\section{Acknowledgements} JEP is supported by an EPSRC postgraduate studentship. We would like to thank Artur Ekert for useful discussions.


\begin{thebibliography}{99}
\bibitem{1} J.\ S.\ Bell, Physics \textbf{1}, 195 (1964).
\bibitem{2} S.\ J.\ Freedman and J.\ F.\ Clauser, Phys.\ Rev.\ Lett.\ \textbf{28}, 938 (1972).
\bibitem{3} A.\ Aspect, J.\ Dalibard and G.\ Roger, Phys.\ Rev.\ Lett.\ \textbf{49}, 1804 (1982).
\bibitem{4} P.\ M.\ Pearle, Phys.\ Rev.\ D \textbf{2}, 1418 (1970).
\bibitem{5} T.\ Scheidl \emph{et al.}, Proc.\ Natl.\ Acad.\ Sci.\ USA, doi:10.1073/pnas.1002780107 (2010).
\bibitem{6} M.\ A.\ Rowe \emph{et al.}, Nature \textbf{409}, 791 (2001).
\bibitem{7} A.\ K.\ Ekert, Phys.\ Rev.\ Lett.\ \textbf{67}, 661 (1991).
\bibitem{8} J.\ Barrett, L.\ Hardy and A.\ Kent, Phys.\ Rev.\ Lett.\ \textbf{95}, 010503 (2005).
\bibitem{9} U.\ Vazirani and T.\ Vidick, arXiv:1210.1810v2 (2012).
\bibitem{10} S.\ Pironio \emph{et al.}, Nature \textbf{464}, 1021 (2010).
\bibitem{11} M.\ Feldmann, Found.\ of Phys.\ Lett.\ \textbf{8}, 41 (1995).
\bibitem{12} J.\ Kofler, T.\ Paterek and C.\ Brukner, Phys.\ Rev.\ A \textbf{73}, 022104 (2006).
\bibitem{13} M.\ J.\ W.\ Hall, Phys.\ Rev.\ Lett.\ \textbf{105}, 250404 (2010).
\bibitem{14} J.\ Barrett and N.\ Gisin, Phys.\ Rev.\ Lett.\ {\bf 106}, 100406 (2011).
\bibitem{15} L.\ P.\ Thinh, L.\ Sheridan and V.\ Scarani, Phys.\ Rev.\ A \textbf{87}, 062121 (2013).
\bibitem{16} R.\ Colbeck and R.\ Renner, Nat.\ Phys.\ \textbf{8}, 450 (2012).
\bibitem{17} R.\ Gallego et al.\, arXiv:1210.6514 (2012).
\bibitem{18} A.\ Grudka et al.\, arXiv:1303.5591 (2013).
\bibitem{19} D.\ E.\ Koh \emph{et al.}, Phys.\ Rev.\ Lett.\ \textbf{109}, 160404 (2012).
\bibitem{20} M.\ Santha and U.\ V.\ Vazirani, Proceedings of the 25th IEEE Symposium on Foundations of Computer Science (FOCS-84), 434 (1984).
\bibitem{21} J.\ F.\ Clauser, M.\ A.\ Horne, A.\ Shimony and R.\ A.\ Holt, Phys.\ Rev.\ Lett.\ \textbf{23}, 880 (1969).
\bibitem{22} $p=2$ can be analysed using quadratic programming.
\bibitem{23} M.\ J.\ W.\ Hall, Phys.\ Rev.\ A \textbf{84}, 022102 (2011).
\bibitem{24} J.\ Barrett, D.\ Collins, L.\ Hardy, A.\ Kent and S.\ Popescu, Phys.\ Rev.\ A, \textbf{66} 042111 (2002).
\bibitem{25} N.\ Gisin and R.\ Gisin, Phys.\ Lett.\ A \textbf{260}, 323 (1999).
\bibitem{26} B.\ S.\ Tsirelson, Lett.\ Math.\ Phys.\ {\bf 4}, 93 (1980).
\bibitem{27} E.\ H\"{a}nggi and R.\ Renner, arXiv:1009.1833 (2010).
\bibitem{28} L.\ Masanes, S.\ Pironio and A. Ac\'{i}n, Nat.\ Comm.\ {\bf 2}, 238 (2011).
\bibitem{29} D.\ Collins and N.\ Gisin, J.\ Phys.\ A.\ \textbf{37}, 1775 (2004).
\end{thebibliography}
\end{document}